\newtheorem{theorem}{Theorem}[section]
\newtheorem{lemma}[theorem]{Lemma}
\def\BibTeX{{\rm B\kern-.05em{\sc i\kern-.025em b}\kern-.08em
    T\kern-.1667em\lower.7ex\hbox{E}\kern-.125emX}}
\begin{document}

\title{How UAVs' Highly Dynamic 3D Movement Improves Network Security?\\
}

\author{\IEEEauthorblockN{1\textsuperscript{st} Mohammed Algharib}
\IEEEauthorblockA{\textit{School of Informatics, Computing, and Cyber Systems} \\
\textit{Northern Arizona University}\\
Flagstaff, AZ, US.\\
mohammed.algharib@nau.edu}
\and
\IEEEauthorblockN{2\textsuperscript{nd} Fatemeh Afghah}
\IEEEauthorblockA{\textit{School of Informatics, Computing, and Cyber Systems} \\
\textit{Northern Arizona University}\\
Flagstaff, AZ, US.\\
fatemeh.afghah@nau.edu}
}

\maketitle

\begin{abstract}
Cooperative ad hoc unmanned aerial vehicle (UAV) networks need essential security services to ensure their communication security. Cryptography, as the inseparable tool for providing security services, requires a robust key management system. Alas, the absence of infrastructure in cooperative networks leads to the infeasibility of providing conventional key management systems. Key pre-distribution schemes have shown promising performance in different cooperative networks due to their lightweight nature. However, intermediate decryption-encryption (DE) steps and the lack of key updates are the most concerning issues they suffer from. In this paper, we propose a simple and effective key management algorithm inspired by the idea of key pre-distribution, where it utilizes the highly dynamic UAV node movement in 3D space to provide the key update feature and optimizes the number of intermediate DE steps. Although it is a general model for any mobile ad hoc network, we have selected UAV network as an example domain to show the efficiency of the model given the high mobility. We define the communication density parameter to analytically show that using any highly dynamic random movement pattern leads our algorithm to work effectively. To show the proposed algorithm's effectiveness, we exhaustively analyze its security and performance in the UAV network using the ns-3 network simulator. Results validate our analytical findings and show how the highly dynamic UAV network movement helps our algorithm to provide the key update feature and to optimize the number of DE steps\footnote{This material is based upon work supported by the Air Force Office of Scientific Research under award number FA9550-20-1-0090 and the National Science Foundation under Grant Number CNS-2034218.}.
\end{abstract}

\begin{IEEEkeywords}
FANET, security, key management, key update, connectivity, key-path.
\end{IEEEkeywords}

\section{Introduction}
Cooperative ad hoc UAV networks, also known as flying ad hoc networks (FANETs), have been increasingly deployed in different academic, commercial, and military applications. There are two general communication categories for UAV networks, infrastructure-based, and ad hoc \cite{comSurveyJNCA}. In the infrastructure-based UAV communication, it is suggested to use the already deployed infrastructure, such as a cellular network, to provide communication requirements. This technique has the potential to provide high-quality communication between the UAVs. However, in the case of unavailability of infrastructure, this technique obviously fails. The other technique is to provide communication with the cooperation of the nodes in an ad hoc mode. In this case, the quality of communication depends on the existence and the cooperation of other nodes. While having the UAVs as the aerial users with already deployed cellular networks is the most promising communication solution for UAV networks \cite{coexistance1,coexistance2}, cooperative ad hoc communication is required for the applications with no infrastructure access \cite{fanet,Khaledi}. In its unmanned aircraft systems roadmap for the years 2010-2035, the US army also emphasized the necessity of ad hoc communication for UAV networks \cite{roadmap}. Examples of such scenarios include wildfire monitoring, disaster relief, and search-and-rescue scenarios, to name only a few \cite{wildfire,relay}. 
In search and rescue missions, typically after a disaster or in areas with no infrastructure network coverage, the UAVs can be used as flying cameras to look/sense for a specific target \cite{searchRescue}. Sensing data such as pollution, humidity, temperature, light intensity, and pressure in hard to reach areas is another instance \cite{sensing1,sensing2}. Military applications such as collecting information on battlefield and law enforcement tasks are some other instances \cite{fanetApp}.

In most of the mentioned scenarios, the security of the transferred data is crucial. Like any other communication system, FANETs need to be provided with confidentiality, integrity, authentication, non-repudiation, and availability services \cite{roadmap}. Cryptography can play an essential role in providing security services. Regardless of being symmetric or asymmetric, any cryptosystem needs a robust key management system to deal with the key generation, key revocation, key update, key certificate, etc. The conventional key management systems are built based on an infrastructure or trusted third parties. There is no generally secure key management system that can work perfectly in an infrastructure-less environment, to the best of our knowledge. 

Some proposed key management systems are built based on the characteristics of the network of study in different ad hoc networks such as wireless sensor network (WSN), mobile ad hoc network (MANET), and vehicular ad hoc network (VANET). For instance, the idea of symmetric key pre-distribution is proposed for WSN to provide some chained encryption between communicating sensor nodes \cite{gligor}. Since the resource limitations are the main concerns in such networks, authors of \cite{gligor} proposed to store only $k$ randomly chosen keys in each node where $k<<n$ and $n$ is the number of network nodes. In this case, each node can communicate securely with its physically adjacent node if they share a common key. Elsewhere they have to find a key-path, which any adjacent nodes share a key. In this case, the transferred messages have to be decrypted at any intermediate node and encrypted again by the shared key with the next node. Such intermediate decryption and encryption processes are referred to as \textit{intermediate DE steps} and are obvious security threats. The resource limitations in WSNs lead to some security concerns such as intermediate DE steps, the absence of key update, and the infeasibility of providing essential security services such as integrity, digital signature, and non-repudiation to be completely ignored. 

Probabilistic asymmetric key pre-distribution (PAKP) \cite{waina} is another instance. It is proposed to be used in MANET where the nodes have less resource limitation compared to WSN. PAKP improves the security of conventional key pre-distribution schemes by decreasing the number of intermediate DE steps and taking advantage of asymmetric cryptography in providing the digital signature, integrity, and non-repudiation services. Instead, it increases the computation complexity of the required cryptographic operations. However, PAKP still cannot update the keys and carries the concern of intermediate DE steps.

In this paper, we propose an asymmetric key management system for cooperative UAV networks, inspired by the idea of key pre-distribution, which exploits the highly dynamic nature of FANETs to provide the key update feature and optimizes the number of DE steps. On the one hand, asymmetric cryptography requires higher complexity than symmetric cryptography, but it brings the feasibility of providing the digital signature, integrity, and non-repudiation security services. On the other hand, UAVs are practically limited by their energy in performing highly complex computations, which means that they can perform complex mathematical operations but have to perform such operations as less often as possible. Accordingly, in this paper, we use asymmetric cryptography and design our algorithm such that the UAVs need to perform asymmetric cryptography operations as less often as possible. 

We propose for each UAV to assign a specific storage for key management. Any pair of UAV nodes exchange their public keys when they become in the communication range. If the UAV has enough storage space, it stores the new key. Elsewhere, the UAV decides, based on a pre-defined strategy, to discard the key or replace it with a previously stored one. 

While the proposed idea could be used in any mobile cooperative network, the UAV high-speed 3D movement helps the proposed idea be more effective in FANETs. Hereby, we first define a parameter, refer to it as \textit{communication density}, to fairly compare FANETs with other cooperative networks such as MANET and VANET on its basis. Then, we analytically show that if the nodes move randomly, regardless of the movement pattern, the network density follows a Normal distribution. Consequently, the proposed algorithm will result in the same pattern for network density related parameters. We show via exhaustive simulation in network simulator ns-3 that the proposed algorithm is more secure and reveals higher performance in FANETs. Basically, the highly dynamic movement of FANET nodes in 3D space helps the algorithm work more efficiently compared to conventional MANET and VANET, in which the nodes move with less dynamicity in 2D space. 

To evaluate the proposed algorithm, we measure the number of intermediate DE steps, the connectivity probability, the storage required by each node, the lowest possible expiration time of the keys, the average time for each UAV to meet all other nodes, and the average time required to gain full connectivity, as the performance and security metrics. To report fair results, we perform our simulation using two random mobility models of random waypoint (RWP) \cite{rwp} and Gauss-Markov (GM) \cite{gm} mobility models. We further propose three different key storage and update strategies when the network nodes have limited storage space and cannot store all keys. 

The evaluation results show that, although the proposed algorithm works for conventional ad hoc networks, in ad hoc UAV networks it improves the network connectivity, reduces the intermediate DE steps by up to $25\%$, increase the probability of visiting all nodes by up to $75\%$, decrease the time to visit all nodes to half of the time, and reduces the time to provide full connectivity to the one third in comparison with the conventional ad hoc networks. The mentioned evaluation process covers a communication range starting from less than one node per communication area/sphere to 25 node. 

The main contribution of this paper is to propose a fully distributed key management algorithm, inspired by the idea of key pre-distribution and utilizes the highly dynamic nature of UAV network, to provide key update. Furthermore, we exhaustively analyze how network dynamicity can help to secure cooperative communication. We define a parameter referred to as communication density to perform a fair analysis on its basis. We analytically prove that using this parameter leads to a fairly identical pattern of results for different random mobility models. We further design, analyze, and compare three different strategies for the key updates. 

The rest of the paper is organized as the following. We review the related works in Section (\ref{sec::relatedWork}). Then, we propose the key management algorithm and the storage and update strategies in Section (\ref{sec::proposedAlg}). Section (\ref{sec::analysis}) reviews the simulation setting and includes the proposed algorithm's performance and security evaluation. Finally, the paper is concluded in Section (\ref{sec::conclusion}).  

\section{Related Work}
\label{sec::relatedWork}
Generally, there are two categories of key management systems that support the infrastructure-less mobile networks, distributed certificate-based and key pre-distribution based algorithms. Certificate-based algorithms aim at certifying nodes' public keys in a distributed manner \cite{certificate}. In such algorithms, a secret value is required for issuing the certificates. The secret value is distributed among the nodes using Shamir's secret sharing algorithm \cite{shamir}. In this algorithm, the secret is shared among $n$ nodes such that any $k$ of them can certify the public keys. Indeed, any $k-1$ shares cannot reveal any information about the secret value. While the general idea is interesting, it is vulnerable against the node compromise attacks. The attacker needs to capture only $k$ nodes to compromise the entire network. Increasing the value of $k$ leads to a higher level of security. However, it might make the certifying process infeasible.        

Key pre-distribution idea is proposed in two completely different versions for symmetric and asymmetric cryptosystems. It is originally proposed by Eschenauer and Gligor \cite{gligor} for the symmetric cryptosystem for WSNs. The basic idea is to store only $k$ keys in each node before the network starts to work, where $k<<n$ and $n$ is the number of network nodes. The keys are chosen uniformly in random with replacement from a key-pool containing all keys. Any physically adjacent nodes that share a common key can use the key and communicate securely. Elsewhere, they have to find a key-path. 

While the basic idea is simple, it adds some level of security to the WSN with a large number of nodes and limited resources. However, it suffers from several security concerns. First, it is vulnerable against the node capture. The attacker can get access to $k$ secret keys by capturing each node. Hence, the attacker can compromise the entire network by capturing several nodes. One idea to face this shortcoming is to use larger key-pools, i.e. key-pool with a larger number of keys. However, the larger key-pool decreases the probability of sharing keys and consequently leads to network disconnection. The other idea is to make a secure connection between adjacent nodes if and only if they share at least $q$ keys \cite{qcomp}. However, this idea leads also to network disconnection for large $q$ values. There are many other works proposed to face this shortcoming \cite{polynomial,sst,bibd,linkSecurity3,linkSecurity,linkSecurity4,linkSecurity2}.

The second security shortcoming is the large number of intermediate DE steps. Using the Eschenauer-Gligor idea, the path from the source node to the destination has to be a physical path, in which any adjacent nodes share a common key. Hence, the lower bound of the number of intermediate DE steps in this algorithm is equal to the physical path length. Furthermore, since the probability of sharing a secret key is less than one, the number of DE steps is much more in practice. This fact also leads to significant performance degradation due to the generally longer path length. To face this obvious security and performance shortcoming, many works have proposed to guarantee the shared key between the adjacent nodes \cite{bibd,deterministic1,deterministic2}. Even using such algorithms, the number of intermediate DE steps stays in the order of physical path length, which is high. One practical solution is to use some disjoint key-paths to perform a key-exchange process. The data transmission process then uses the exchanged key with the destination to make an end-to-end secure communication \cite{multipath1,multipath2,multipath3}. The mentioned multi-path solutions improve the security of the symmetric based key pre-distributions. However, they are still vulnerable to cooperative attacks or multiple node capture.   

PAKP \cite{waina} is the first asymmetric key pre-distribution introduced to be used in MANETs. The core idea is similar to the Eschenauer-Gligor idea. However, the key-pool includes the public key of all nodes instead of random secret keys. Each node is pre-loaded by $k$ public keys, chosen uniformly at random from the key-pool, before the network starts to work. Clearly, the first superiority of PAKP in comparison with the symmetric key pre-distributions is in the distribution of public values. By capturing any node, the attacker can get access to $k$ public keys and only one private key of the compromised node. Second, in PAKP, the source node finds a key-path to the destination, and then for each hop in the key-path, it has to find the corresponding physical path. It means that the intermediate DE steps happen only in the intermediate nodes of the key-path. Authors of \cite{waina} showed that the number of intermediate DE steps, in this case, is in the order of $(\log_k n)$ which is much lower than that of symmetric key pre-distribution.     

Although the authors of \cite{waina} showed that the number of intermediate DE steps in their algorithm is too low, any DE step stays an obvious security threat. In \cite{kpsec}, authors proposed an algorithm that utilizes the disjoint key-path of PAKP to exchange the public keys of the source and the destination before data transfer. The data is then encrypted and transferred using the public key of the destination. Hence there will not be any intermediate DE steps.     

As we mentioned earlier, in symmetric key pre-distribution, a physical path has to be found first. The corresponding key-path follows each hop of the physical path. In asymmetric key pre-distribution, the routing process is in reverse; thus, the key-path has to be found first, then, for each hop in the key-path, the corresponding physical path is used. For each step of routing, any conventional routing algorithm could be used. Hence, regardless of being symmetric or asymmetric, a conventional routing algorithm has to be used twice to find the general path for the key pre-distribution. Authors of \cite{tmc} showed that such a routing process of using a conventional routing algorithm twice leads to a non-optimal path. They proposed an optimization problem that finds the generally optimal path.   

Obviously, using key pre-distribution in any of its versions, symmetric or asymmetric, there is no option for revoking and updating the keys. Hence, the compromised key stays compromised until the end of the network lifetime. In this paper, we aim at utilizing the highly dynamic nature of cooperative UAV networks to add the feature of expiring and updating the keys.  

\section{Proposed Algorithm}
\label{sec::proposedAlg}

The algorithm of this paper is inspired by the general idea of key pre-distribution schemes. However, we propose to collect the keys during the network operation times, instead of pre-distributing them before the network starts to work. In this section, we first review the key generation and exchange process. Then, we define a communication density parameter which gives us the ability to fairly analyze the network parameters on its basis. Finally, we review some key store and update strategies.   

\subsection{Key Generation and Update}
To make the UAVs able to update their keys, we propose for each node to pick an arbitrary private key and then generates its corresponding public key. While any well-known asymmetric cryptosystem can be utilized, we suggest to use elliptic curve cryptography (ECC) for its shorter key-length and less computational complexity for the same-level of security. 

The ECC system parameters $(p,a,b,G,n,h)$ has to be set before the network start to work. These parameters are required to form the elliptic curve $y^2=x^3+ax+b$ as a plane curve over the finite field $\mathbb{F}_p$, and to help the nodes to generate their public-private key pairs. $G$ is the generator of the cyclic subgroup and known as base point.While $n$ is the smallest positive number satisfying $nG=\mathcal{O}$ and $\mathcal{O}$ is a point at infinity, $h$ is a small integer number satisfying $h=\frac{1}{n}|E(\mathbb{F}_p)|$.  

Any node can choose any arbitrary value $x \in [1\quad n-1]$ as its private key. The corresponding public key is then $y=x.G$, which means adding $G$ to itself for $x$ times. Any UAV node, by generating its private-public key pairs, chooses an expiration time for this pair of keys. Any pair of UAV nodes that get into the communication range of each other can exchange their public keys safely along with their signature. The signature is a certificate signed by the private key of the node to certify the ownership of the private key for the exchanged public key. Along with the public key, each node sends the key's expiration time to the other node. It is out of the context of this paper to go through the hardware and software required to improve the security of the generated keys. However, we suggest using physically unclonable functions (PUF) with technologies like Resistive Random Access Memory (ReRAM) and static random-access memory (SRAM) \cite{puf,Korenda} in the key generation phase, to make it feasible to provide security services such as non-repudiation. 

Any pair of nodes that have their public keys, can communicate securely before the key gets expired. Any pair of nodes that have not met earlier or their stored keys have expired, can find a \textit{key-path} to communicate with one another. A key-path is a path starting from the source node and ending by the destination in which each adjacent nodes have their public keys. While the idea of key-exchange during network operation time is not a novel idea, the conventional cooperative networks need a long time for key collection and exchange to make all nodes able to communicate with each other. Accordingly, they cannot update the keys regularly and the key update leads to network disconnection. We show this fact in Section (\ref{sec::analysis}). 

The other problem is that the intermediate nodes in the key-path have to decrypt and then encrypt the forwarded packets. This obvious security imperfection referred to as intermediate DE steps. Although the problem of intermediate DE steps will not be totally resolved in this paper, we show that the highly dynamic nature of FANETs leads our key management algorithm to efficiently decrease the number of intermediate DE steps as well as the time required for the nodes to get the public keys of one another. We show further that the connectivity of this algorithm converges very fast which helps the nodes to decrease the expiration time of their keys and consequently reach higher security milestones.

\subsection{Communication density}

In one hand, the proposed key exchange process relies on the node movement pattern in the network. On the other hand, we aim at calculating the key-path existence probability, time to full connectivity, number of intermediate DE steps, probability of visiting all other nodes, and time to visit all, to show how the highly dynamic nature of UAV networks helps our algorithm to work more effective in FANETs in comparison with the conventional cooperative networks. To make a fair comparison we need a single parameter to examine the network performance on its basis. Hence, we define  \textit{communication density} parameter. It is the average number of nodes in the communication area and communication sphere covered by the communication range of a node in 2D area and 3D sphere, respectively. Since in FANETs, the UAVs move in 3D sphere, we calculate the communication density by dividing the number of UAV nodes over the communication sphere density of one UAV. Assume that the communication range of the UAV is $r$, there is a sphere with the volume of $\frac{3}{4}\pi r^3$ which represents the communication sphere of that UAV. Any other UAV gets into this sphere will be in the communication range of the corresponding UAV. Hence, the communication density represents the average number of nodes in the communication area/sphere of the node. For MANET/VANET, the communication area of the node is $\pi r^2$. Hence, the communication density is $n/(\frac{XY}{\pi r^2})$, where $n$ is the number of network nodes. Fig. (\ref{fig::commDensity}) conceptually shows  the communication density parameter for 2D area and 3D sphere. 

\begin{figure}[t!]
	\centering
	\subfloat[3D space]{\includegraphics[trim=1.5cm 0 3cm 0,width=0.5\linewidth]{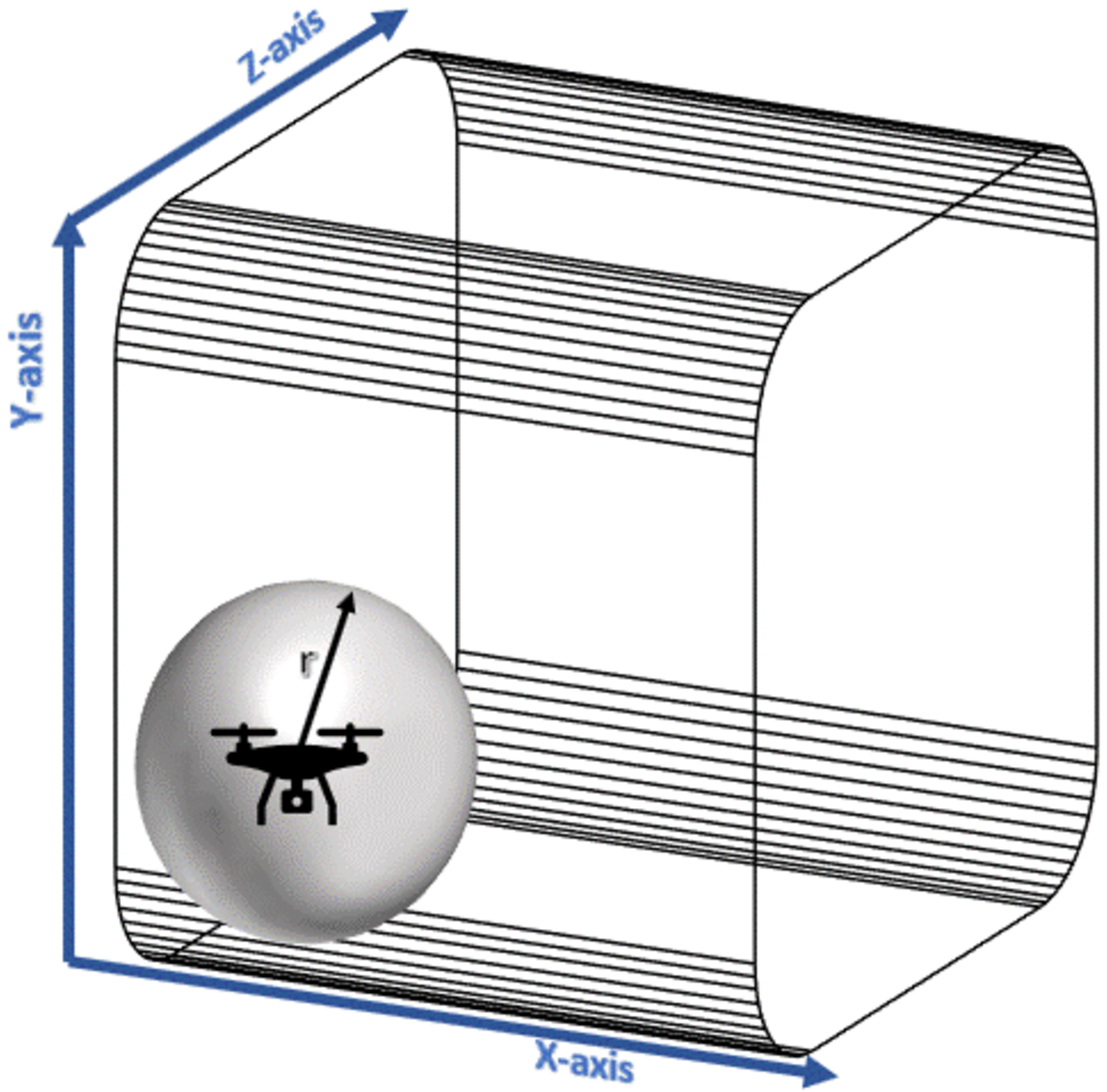}}
	\subfloat[2D area]{  \includegraphics[trim=2cm 0 1cm 0,width=0.5\linewidth]{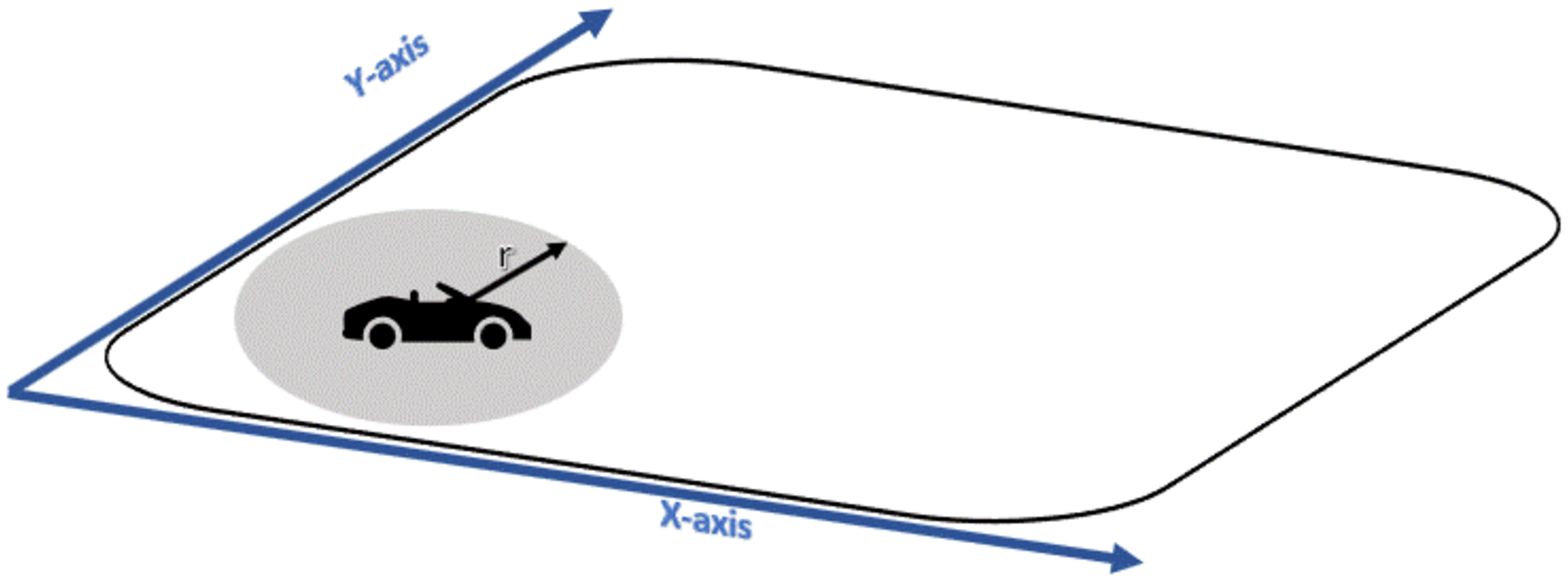}}
	\caption{Communication density}
	\label{fig::commDensity}
\end{figure}

As it is clear, the mentioned security and performance parameters are sensitive to the pattern of nodes' movement. Accordingly, we aim at showing that the pattern of the variation of these parameters based on the network density is fairly identical for any random mobility model. 

\begin{lemma}
\label{lem::nodeDistribution}
If the nodes move independently according to an identical random movement algorithm, the distribution of the nodes' density in each time snapshot follows the Normal distribution with the expected value $\mu$ and variance $\sigma^2$. The values of $\mu$ and $\sigma^2$ may vary based on the mobility models.
\end{lemma}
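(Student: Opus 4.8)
The plan is to reduce the statement to a Central Limit Theorem for a sum of independent indicator variables. First I would fix an arbitrary time snapshot $t$ together with a reference region, namely the communication sphere of radius $r$ in 3D (or the disk of radius $r$ in 2D). For each node $i \in \{1,\dots,n\}$ I introduce the indicator $X_i^{(t)}$ that equals $1$ when node $i$ lies inside this region at time $t$ and $0$ otherwise. The communication density observed in this snapshot is then the count $D_t = \sum_{i=1}^{n} X_i^{(t)}$, and the goal is to identify its law.

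Second, I would use the two hypotheses of the lemma to pin down the law of each $X_i^{(t)}$. Because every node follows the \emph{same} random movement algorithm (and shares the same initial position law), the marginal position distribution of node $i$ at time $t$ is identical across $i$; integrating this marginal density over the fixed region yields a single occupancy probability $p = p(t)$ that does not depend on the index $i$. Because the nodes move \emph{independently}, the variables $X_1^{(t)},\dots,X_n^{(t)}$ are mutually independent. Hence each $X_i^{(t)}$ is Bernoulli$(p)$, and $D_t$ is exactly Binomial$(n,p)$.

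Third, I would invoke the De Moivre--Laplace theorem, the Binomial special case of the classical Lindeberg--L\'evy Central Limit Theorem. The summands $X_i^{(t)}$ are i.i.d. and bounded, so the hypotheses hold trivially, and for large $n$ the standardized count $\frac{D_t - np}{\sqrt{np(1-p)}}$ converges in distribution to the standard Normal. Consequently $D_t$ is approximately $\mathcal{N}(\mu,\sigma^2)$ with $\mu = np$ and $\sigma^2 = np(1-p)$. Since $p$ is the integral of the model-dependent stationary position density over the region, both $\mu$ and $\sigma^2$ inherit their numerical values from the particular mobility model, which is precisely the model dependence asserted in the statement.

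The main obstacle I anticipate is not the CLT step but justifying the clean i.i.d. Bernoulli structure in the first place. Two points need care. First, the claim that the per-node occupancy probability $p$ is common to all nodes presumes that the position marginals have equilibrated, so for a model such as RWP---whose stationary spatial density is non-uniform and concentrated toward the centre---I would work with the stationary distribution and treat the transient regime as an approximation. Second, since the definition centres the region on a node rather than at a fixed point, the probability $p$ in fact depends on that reference node's instantaneous position, so $D_t$ is really a mixture of Binomials. To close this gap I would condition on the reference position, obtain a conditional Binomial, and then argue that averaging over the reference position preserves asymptotic normality, as it only superimposes a bounded between-region variance on an already Gaussian family.
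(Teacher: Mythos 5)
Your proof takes a genuinely different route from the paper's. The paper applies the CLT to the node position vectors themselves: it writes each node's location as an i.i.d. random vector $\textbf{X}_i=(X,Y,Z)\sim\delta(\mu_\delta,\sigma^2_\delta)$ and invokes the CLT on the sample mean $\frac{1}{n}\sum_i \textbf{X}_i$, identifying that sample mean with ``the nodes' density.'' You instead discretize: you count nodes inside a communication region via indicator variables, obtain a Binomial$(n,p)$ law from independence and identical marginals, and finish with De Moivre--Laplace. Your route is closer to the paper's operational definition of communication density (the number of nodes in a communication area/sphere), whereas the paper's identification of the positional sample mean---which is a centroid, not a density---with the density is the weakest step of its argument; the paper's proof is also internally inconsistent, stating that the CLT gives variance $\frac{1}{n}\sigma^2_\delta$ and then concluding $\sigma^2=\sigma^2_\delta$. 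Your version additionally makes the model dependence of $(\mu,\sigma^2)$ concrete through the occupancy probability $p$, which the paper leaves implicit.

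One caution on your final step: a mixture of Normals whose mean depends on the mixing variable is not Normal. When the region is centred on a reference node, conditioning on its position $x$ gives a count that is Binomial$(n-1,p(x))$ with mean of order $n\,p(x)$; if $p(x)$ genuinely varies across positions (as for RWP, whose stationary density is peaked at the centre), the between-position fluctuations are of order $n$ while the within-Binomial fluctuations are only of order $\sqrt{n}$, so for large $n$ the unconditional law is governed by the distribution of $p(X)$ rather than by a Gaussian. Your argument is therefore exact in the homogeneous case ($p(x)$ essentially constant, e.g.\ a uniform stationary density away from boundaries), and conditional on the reference position in general, but the claim that averaging over the reference position ``preserves asymptotic normality'' does not hold unconditionally. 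Since the paper's own proof glosses over strictly more than this, the gap is not disqualifying, but you should state the homogeneity (or conditioning) assumption explicitly rather than treating the averaging step as harmless.
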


\begin{proof}
We use central limit theorem (CLT) to prove this lemma. Let us represent the movement of each node by random variable $\textbf{X}$. $\textbf{X}$ is a random variable in three-dimentional space $\textbf{X}=(X,Y,Z)$ where $X$, $Y$, and $Z$ are random variable themselves. According to the lemma's assumption, the nodes move independently according to an identical random movement algorithm. Thus, we represent the movement of each node by $\textbf{X}_i$ where $i=0,1,\dots,n-1$, and $n$ represents the number of network nodes. We know that each $\textbf{X}_i$ follows the same distribution, let us say distribution $\delta$, i.e. $\textbf{X}_i\sim\delta(\mu_\delta,\sigma^2_\delta)$. In one hand, the sample mean of the i.i.d. $\textbf{X}_i$s represents the distribution of the network nodes' density. On the other hand, according to CLT, the sample mean of i.i.d. random variables follows Normal distribution with the expected value $\mu=\mu_\delta$ and variance of $\sigma^2=\frac{1}{n}\sigma^2_\delta$. Hence, if the nodes move independently according to an identical random movement algorithm, the distribution of the nodes' density in each time snapshot follows Normal distribution with the expected value $\mu=\mu_\delta$ and variance $\sigma^2=\sigma^2_\delta$.
\end{proof}

Lemma (\ref{lem::nodeDistribution}) reveals that for the same network setting but different mobility models, the values of network parameters such as the number of neighbors, connectivity probability, time required for full connectivity, and time required to visit all nodes may vary for different mobility models based on the distribution of the movement of each specific model. However, the pattern of the mentioned parameters stays fairly identical for different communication densities. We investigate this fact via exhaustive simulation in Section (\ref{sec::analysis}). It is worthy to mention that even for small $n$ values, the CLT is proved to work with a high confidence level in practice.

\subsection{Key Storage Strategies}
Clearly, the best strategy for key storage and update is to store the public key of any met node. Accordingly only the expired keys are subject to remove. However, for the case that the nodes have limited storage space or the network is very large scale, the UAV cannot store the public key of any node that gets into its communication range. Thus, an efficient algorithm for storage management is required. In the following, we discuss three key storage strategies.

Assume that we have storage space for $k$ keys where $k<n$, and $n$ is the number of all nodes. In the first strategy, we start storing the public keys until the storage becomes full. Then, we replace any new key with the stored key that have the lowest key lifetime, i.e. the key with closest expiration time. In this case, we always have the most possible fresh key-table. In the second strategy, the UAV starts storing the public key of other UAVs that are located in its communication range until its storage space becomes full. In this strategy, we replace only the expired keys with new ones. Hence, the UAV's key-table is not updated unless a key gets expired. We define the third strategy to be a combination of the first and the second strategies, to possibly carry their advantages. In this strategy, we divide the storage into two separate storage with $k_1$ and $k_2$ key storage space, where $k_1+k_2=k$. Each UAV starts storing the public keys of the visited nodes until its storage becomes full. Then, it keeps updating the first $k_1$ keys where the rest of keys stay as they are until get expired. 

Although the proposed key storage strategies are simple, we show that the different strategies have significant effect on the performance and security parameters of the network. Section (\ref{sec::analysis}) represents the results of our analysis. 

\section{Security and Performance Analysis}
\label{sec::analysis}

In this section, we first review the simulation setting. We then compare the performance and security metrics of our proposed algorithm between a UAV network and a conventional mobile cooperative network such as MANET or VANET. Finally, we compare the effect of the proposed key storage mechanisms on the performance and security metrics. 

\subsection{Simulation Setting} 
We use network simulator ns-3 \cite{ns3} to perform our evaluation. While Table (\ref{tbl::simulationSetting}) represents the simulation setting, we consider the probability of key-path existence, key-path length, probability of visiting all other nodes, time to visit all, and time to full connectivity as the comparison metrics.

\begin{table}[t]
\caption[]{Simulation Setting }
\resizebox{1\textwidth}{!}{
\begin{minipage}{\textwidth}
\begin{tabular}{ l | l  }
 Simulator version & ns-3 3.30\\
  Number of nodes & $100$ \\
  Area size& $X\times Y\times Z m$\\
  Range of $X$ and $Y$ &$[500:100:1500] m$\\
  Maximum elevation for FANET& $100 m$\\
  Transmission power & 7.5 dBm\\
  Simulation time & 1000 sec\\
  Speed range for FANET& $[0 \quad 50] m/s$\\ 
  Speed range for VANET/MANET& $[0 \quad 20] m/s$\\ 
  Mobility models & RWP and G-M\\
  Azimuthal range in G-M model & $[0 \quad 0.05]$ rad\\
  Wireless communication standard & IEEE 802.11b\\
  Propagation loss for FANET  & Free-space propagation loss\\
  Propagation loss for VANET/MANET  & Two-ray propagation loss\\  
  Propagation delay model & Constant speed propagation delay
\end{tabular}
\label{tbl::simulationSetting}
\end{minipage}}
\end{table}

Generally, there are three main differences between FANET and MANET/VANET. First, in FANETs the UAV node moves in 3D sphere in comparison with the movement in MANET/VANET where the node moves in 2D area. Second, the communication in FANET is often considered as the line of sight and hence the propagation loss model is a free space loss model, known also as Friis model \cite{speed}. In MANET/VANET the propagation loss is assumed to follow a two-ray propagation loss model. Third, the dynamicity of the nodes in FANET is known to be much more than the MANET/VANET. To apply a higher dynamicity to UAVs, we consider the maximum speed for MANET/VANET to be  $20 \frac{m}{s}$, while we assume that UAV's maximum speed is equal to $50 \frac{m}{s}$ \cite{speed}. 

To keep the simulation fair, we set the maximum elevation of UAV nodes to $100 m$, and consider a square area size with the length ranging from $500m$ to $1500 m$ in the increment steps of $100m$. Using the mentioned parameters, the communication density for FANET and VANET/MANET stays almost equal for different area lengths. We consider two well-known mobility models, random waypoint (RWP) \cite{rwp} and Gauss-Markov (GM) \cite{gm} to cover different mobility patterns. While RWP is a fairly random mobility model, GM mobility model prevents the nodes from sudden significant variation in the angle of movement. Since in real world the nodes try to reach a destination and hence do not make a significant change in their directions, this limitation leads to more realistic scenarios. We performed the simulations 20 times with different random seed values. Each time every parameter is calculated in  1 sec snapshots. The reported results are the average of all the traced values.

\subsection{FANET versus MANET/VANET Comparison}

In this section, we aim at comparing the performance and the security of the proposed algorithm in FANET compared to MANET/VANET. Hereby, we consider a large enough storage space for the nodes to keep any keys. This assumption helps us  evaluate some parameters such as the time to reach full connectivity or the time to visit all other nodes. Fig. (\ref{fig::KPprobUnlimited}) compares the probability of key-path existence for different communication densities. This figure considers all the combinations of FANET and MANET/VANET networks, as well as RWP and GM mobility models. While as the density increases, the key-path probability converges to one for all combinations, its convergence speed is much faster in UAV networks for both the considered mobility models. In this figure, the upper horizontal axis represents the communication density in 2D area where the lower horizontal axis represents the communication density for 3D sphere.  

\begin{figure}[t!]
	\centering
	\includegraphics[width=\linewidth]{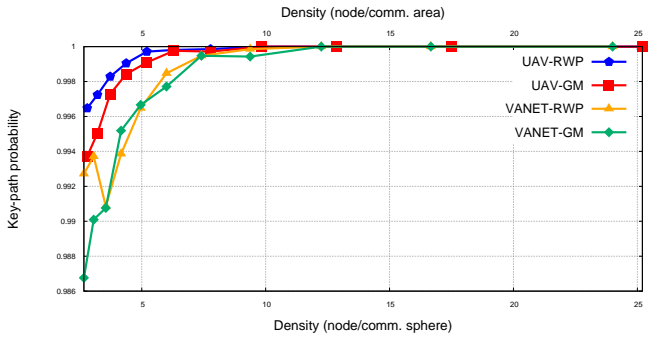}
		\caption{A comparison of the probability of key-path existence between FANET and MANET/VANET.}
	\label{fig::KPprobUnlimited}
\end{figure}

Next, we calculate the time to reach the full key-connectivity. This parameter shows us the minimum feasible time for key expiration. If we set the key expiration time lower than this value, the nodes will not have enough time to find a secure connection to one another. Fig. (\ref{fig::TTFC}) shows the time required for the network in which all nodes become able to find a key-path to one another. As it is clear in this figure, FANET significantly outperforms other networks. In the considered simulation setting, the time required for the full key-path connectivity for FANET with GM mobility is about 12 second which is one third of the time required by other conventional cooperative networks.   

\begin{figure}[t!]
	\centering
     \includegraphics[width=\linewidth]{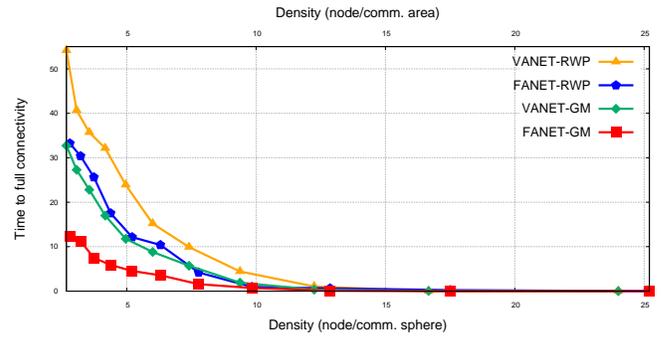}
	\caption{A comparison of time required for the network to be fully connected via key-paths for FANETs versus VANETs, both RWP and GM mobility models.}
	\label{fig::TTFC}
\end{figure}

The results of Fig. (\ref{fig::KPprobUnlimited}) ensure us that the key exchange algorithm provides the key-connectivity for the network. However, the key-path length represents the number of intermediate DE steps and considered as a security as well as performance metric. Fig. (\ref{fig::DEUnlimited}) represents the number of intermediate DE steps for all combinations of FANET versus MANET/VANET as well as RWP versus GM mobility models. This figure shows a significant improvement in the number of intermediate DE steps in FANET in comparison with conventional cooperative networks. For the FANET with GM mobility, this parameter is always close to one, which means that in most cases only one intermediate node is included in the key-path.   

\begin{figure}[t!]
	\centering
  \includegraphics[width=\linewidth]{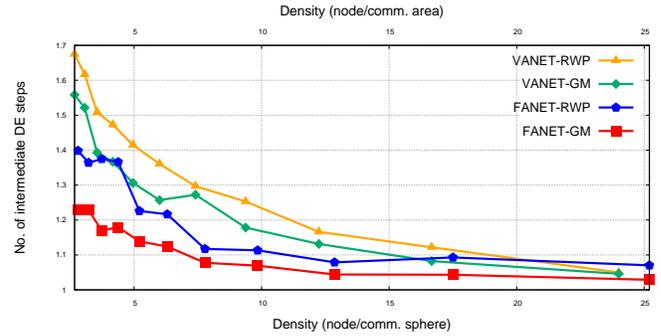}
	\caption{A comparison of the number of intermediate DE steps.}
	\label{fig::DEUnlimited}
\end{figure}

Fig. (\ref{fig::visitAllprob}) shows the probability of visiting all other nodes. This parameter is the main parameter that shows the effectiveness of our algorithm on FANET in comparison with other conventional cooperative networks. Here, we can see how the high dynamicity of UAV movement in 3D space leads our proposed algorithm to be effective. In addition to the probability of visiting all other nodes, the time of visiting all is of crucial. Fig. (\ref{fig::visitAllTime}) compares this parameter for different networks and different mobility models. This parameter helps us figure out how to manage the expiration time of the stored keys to keep the network connected. As it is clear from this section's figures, the shape and the patterns of all curves in each figure are almost identical, which  validates our analytical proof of Lemma (\ref{lem::nodeDistribution}).

\begin{figure}[t!]
	\centering
     \includegraphics[width=\linewidth]{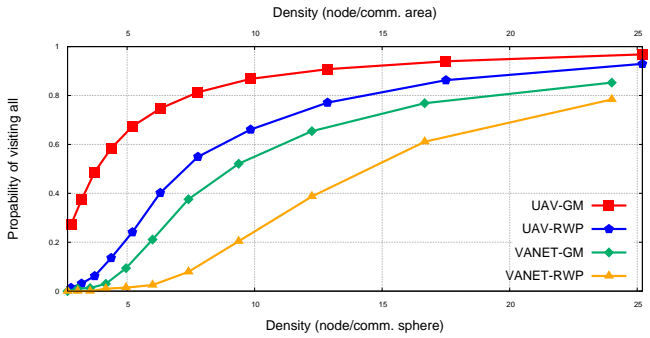}
		\caption{Probability of visiting all other nodes.}
	\label{fig::visitAllprob}
\end{figure}
	
	\begin{figure}[t!]
	\centering
    \includegraphics[width=\linewidth]{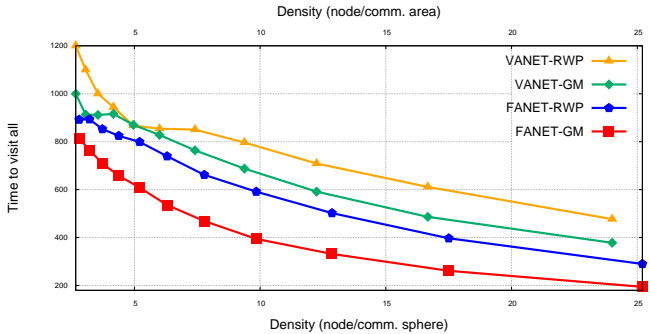}
	\caption{Time to visit all other nodes.}
	\label{fig::visitAllTime}
\end{figure}

\subsection{Key Storage Strategies Comparison}

As we proposed in Section (\ref{sec::proposedAlg}), we consider three strategies for key storage and update. In the first strategy, the UAV nodes simply exchange and update their keys whenever they get into the communication range of one another. Thus, any new key takes place the expired key or the key with the closest expiration time. In the second strategy, we store the key of the newly met UAV if and only if we have a unused storage space or an expired key. The third strategy is a combination of both the mentioned strategies. Hence, the storage space of each UAV node is divided into two parts. The first part stores any new keys and replaces the oldest one if this part of storage is full. The second part never replaces a key except when there is an expired one. We considered a storage space for 10 keys in each UAV. For the last strategy, we considered each part with 5 key storage spaces, i.e. $k_1=k_2=5$. We consider $100$ seconds as the expiration time for each key. To compare the proposed strategies for their performance and security, we report the key-path existence probability, average number of intermediate DE steps, and overall path length. 

First, we consider the key-path existence probability. This parameter represents how much the proposed algorithm is effective and practical. Fig. (\ref{fig::keyPath}) shows the results for the combination of three proposed strategies and both the RWP and GM mobility models. The worthy mentioning outcome of this figure is that for both mobility models at the density of 7, the network is almost connected by the key-paths. However, the first  strategy represents the best results for this parameter.

\begin{figure}[t!]
	\centering
	\subfloat[RWP mobility model.]{\includegraphics[width=\linewidth]{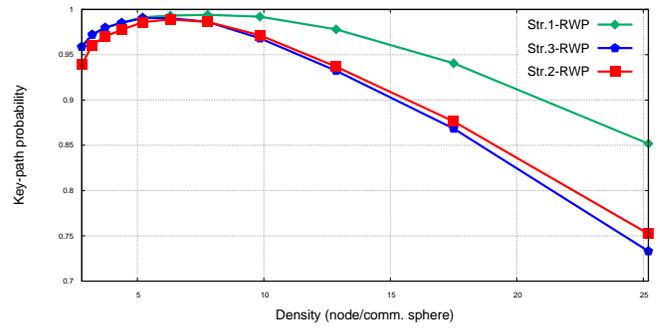}}\\
	\subfloat[GM mobility model.]{  \includegraphics[width=\linewidth]{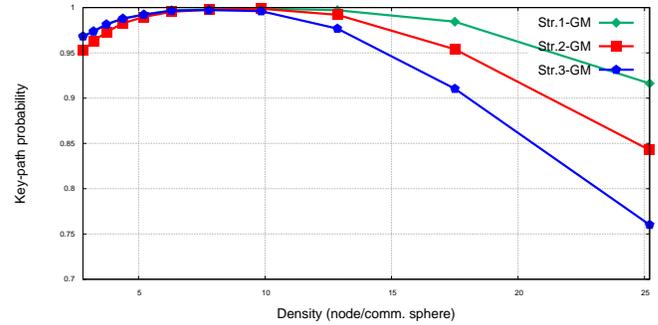}}
	\caption{A comparison of key-path existence probability for FANETs with limited storage.}
	\label{fig::keyPath}
\end{figure}

Fig. (\ref{fig::keyPathLen}) shows the results for the average number of intermediate DE steps. Obviously the less number of intermediate decryption and encryption means the less chance for the attacker to get access to the transferred data by compromising the UAV nodes. As Fig. (\ref{fig::keyPathLen}) shows, the faster key-update in cooperative UAV network leads to less number of intermediate DE steps. Accordingly, the first strategy has the lower number of DE steps, and the hybrid strategy outperforms the second one.

\begin{figure}[t!]
	\centering
	\subfloat[RWP mobility model.]{\includegraphics[width=\linewidth]{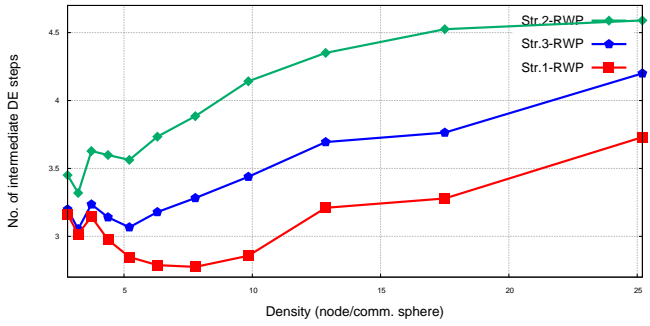}}\\
	\subfloat[GM mobility model.]{  \includegraphics[width=\linewidth]{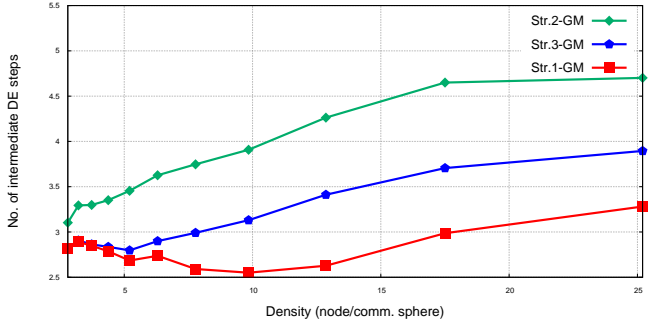}}
	\caption{A comparison of the number of intermediate DE steps for FANETs with limited storage.}
	\label{fig::keyPathLen}
\end{figure}

As we mentioned earlier, using the proposed idea, the routing process has to find the key-path first and then for each hop in the key-path the corresponding physical path is required. Hence, the overall path length is always greater than or equal the key-path length. It is worth mentioning that there are different key-exchange protocols which utilize the key-paths to make a key-agreement between the communication end parties. Hence, the final data communication exploits the shortest physical  path. However, the overall path is used for key exchange process. Fig. (\ref{fig::overalPath}) shows the overall path length for the proposed strategies and both mobility models. The randomness of RWP seems to have a positive impact on the performance of the algorithm in comparison with the more realistic GM mobility model. Again, the first key update strategy shows the highest performance among all considered strategies. It is notable that for the network with density more than 10, the overall path length stays almost steady. Looking at the results of this figure and Fig. (\ref{fig::keyPathLen}) at the same time, we can conclude that the increment in the number of DE steps, i.e. increment in the key-path length, in the more dense networks does not increase the overall path length.

\begin{figure}[t!]
	\centering
	\subfloat[RWP mobility model.]{\includegraphics[width=\linewidth]{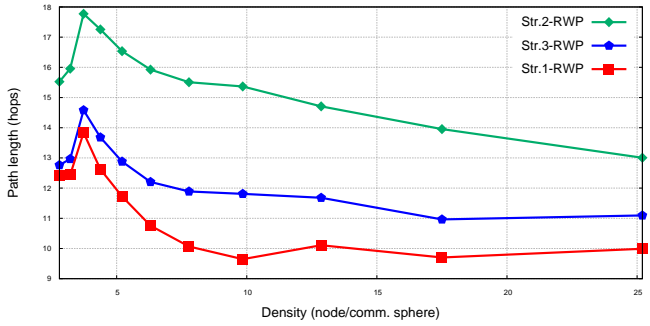}}\\
	\subfloat[GM mobility model.]{  \includegraphics[width=\linewidth]{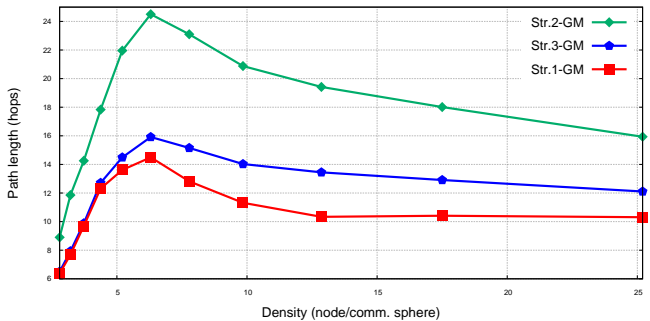}}
	\caption{A comparison of the overall path length.}
	\label{fig::overalPath}
\end{figure}

\section{Conclusion}
\label{sec::conclusion}
Inspired by the idea of key pre-distribution and by utilizing the highly dynamic nature of UAV networks, in this work, we proposed a practical key management with active update feature and optimized  number of DE steps. We analytically proved that for any random movement, the pattern of network density follows Normal distribution where the expected value and the variance vary based on the pattern of movement and area size. Then we validate the analytical proof by exhaustive simulations in ns-3 network simulator. We showed how the highly dynamic nature of UAV networks helps the proposed key management algorithm to work effectively. Finally, we compared different strategies for key storage and update and see that the regular update in the key-table leads to the highest performance and security for the measured parameters.

\nocite{*}
\bibliographystyle{IEEEtran}
\bibliography{References}

\end{document}